\newtheorem{theorem}{Theorem}
\newtheorem{proposition}[theorem]{Proposition}
\theoremstyle{definition}
\newtheorem{definition}[theorem]{Definition}
\theoremstyle{remark}
\newtheorem{remark}[theorem]{Remark}
\newcommand{\RR}{\mathbb{R}}
\newcommand{\drm}{\mathrm{d}}
\newcommand{\euler}{\mathrm{e}}
\newcommand{\CC}{\mathbb{C}}
\newcommand{\NN}{\mathbb{N}}	
\newcommand{\ZZ}{\mathbb{Z}}
\newcommand{\EE}{\mathbb{E}}
\newcommand{\PP}{\mathbb{P}}
\newcommand{\Tr}{\operatorname{Tr}}
\newcommand{\cov}{\operatorname{Cov}}
\newcommand{\var}{\operatorname{Var}}
\newcommand{\Eins}{\mathbf{1}}
\newcommand{\cB}{\mathcal{B}}
\newcommand{\cA}{\mathcal{A}}
\DeclareMathOperator*{\esssup}{ess\,sup}
 	\definecolor{darkred}{rgb}{0.5,0,0}
 	\definecolor{darkgreen}{rgb}{0,0.5,0}
 	\definecolor{darkblue}{rgb}{0,0,0.5}  	\hypersetup{colorlinks,linkcolor=darkblue,filecolor=darkgreen,urlcolor=darkred,citecolor=darkblue}
\title{\Large Wegner estimate for discrete Schr\"odinger operators with Gaussian random potentials}
\author{Martin Tautenhahn}
\affil{Fakult\"at f\"ur Mathematik, Technische Universit\"at Chemnitz, Germany}
\date{\vspace{-2.5em}}
\begin{document}
\maketitle
\begin{abstract}
 We prove a Wegner estimate for discrete Schr\"odinger operators with a potential given by a Gaussian random process. The only assumption is that the covariance function decays exponentially, no monotonicity assumption is required. This improves earlier results where abstract conditions on the conditional distribution, compactly supported and non-negative, or compactly supported covariance functions with positive mean are considered.
\end{abstract}
\section{Introduction}
In this note we study the family of finite volume random Schr\"odinger operators 
\[
 H_{\omega , L} = A_L + \lambda V_{\omega , L}, \quad \omega \in \Omega,
\]
in $\ell^2 (\Lambda_L)$, where $\Lambda_L = [-L/2 , L/2]^d \cap \ZZ^d$, and $\lambda > 0$. Here, $A_L$ is an arbitrary self-adjoint operator, $(\Omega , \mathcal{A} , \PP)$ is a probability space, and the potential values $V_{\omega , L} (x)$, $x \in \Lambda_L$, are given by a non-degenerated stationary Gaussian process $V = \{V_x \colon \Omega \to \RR, \ x \in \ZZ^d\}$ with mean zero and covariance function $\gamma : \ZZ^d \to \RR$ satisfying $\lvert \gamma (x) \rvert \leq D \euler^{-\alpha \lvert x \rvert_1}$ for some positive constants $D$ and $\alpha$. Our main result is a Wegner estimate, that is, an upper bound of the expected number of eigenvalues in a bounded energy interval $I \subset \RR$ of the form 
\[
 \EE \left( \Tr \bigl(\chi_I (H_{\omega , L}) \bigr) \right) \leq
\frac{C_{\mathrm{W}}}{\lambda} \lvert I \rvert \lvert \Lambda_L \rvert^m .
\]
Here, $C_{\mathrm W} > 0$ and $m \geq 1$ are constants depending only on the model parameters. Let us emphasize that the only assumption is that the modulus of the covariance function decays exponentially. In particular, this allows long-range as well as non-monotone correlations at the same time. No monotonicity assumption is required. 
The present paper is inspired by an earlier joint project with Ivan Veseli\'c on Wegner estimates for random Schr\"odinger operators with Gaussian potentials in the continuum space $\RR^d$, whose results will be published in a companion paper.
\par
Wegner estimates serve as ingredients for proofs of localization via multiscale analysis. Here, localization refers to the phenomenon that parts of the spectrum (of the infinite volume operator in $\ell^2 (\ZZ^d)$) consist almost surely only of pure point spectrum (spectral localization), or that the solutions of the Schr\"odinger equation stay trapped in a finite region of space for all time (dynamical localization). The multiscale analysis is an induction argument over the scale $L$. The Wegner estimate establishes the induction step, while the induction anchor is provided by the so-called initial scale estimate. 
Let us note that our Wegner estimate allows to prove localization at all energies where an initial length scale estimate is provided. In particular, the initial length scale estimate follows from our Wegner estimate in the case of sufficiently large disorder $\lambda > 0$, see \cite{DreifusK-89,Kirsch-08}.
For the method of multiscale analysis we refer to the seminal papers \cite{FroehlichS-83,FroehlichMSS-85}, and to \cite{DreifusK-89,GerminetK-01,GerminetK-03,GerminetK-06}. 
If the covariance function does not have compact support, one has to use an enhanced version of the multiscale analysis to prove localization, see \cite{DreifusK-91} for the discrete, and \cite{KirschSS-98} for the continuum setting.
\par
While proofs of Wegner estimates and localization for random Schr\"odinger operators have initially been developed in the case where the potential values are independent and identically distributed, see \cite{FroehlichS-83,FroehlichMSS-85,DreifusK-89} for multiscale analysis, and \cite{AizenmanM-93,Aizenman-94,Graf-94} for the so-called fractional moment method, both methods have been subsequently extended to models where the potential values at different lattice sites are correlated random variables. We focus our discussion here on Gaussian random potentials. 
\par
In \cite{DreifusK-91,AizenmanM-93,AizenmanG-98,Hundertmark-00,AizenmanSFH-01,Hundertmark-08} random operators with so-called conditional $\tau$-H\"older continuous potential values are considered. This means, that the distribution of $V_0$ conditioned on fixed values $V_k$, $k \not = 0$, is (uniformly) $\tau$-H\"older continuous. In \cite{DreifusK-91} the authors construct an example of a Gaussian process which satisfies this assumptions. However, in Section~\ref{sec:conditional} we show that Gaussian processes are in general not conditional $\tau$-H\"older continuous, even not for compactly supported covariance functions.
 \par
In \cite{FischerHLM-97,Ueki-04,Veselic-11} random Schr\"odinger operators in $L^2 (\RR^d)$ with a Gaussian random potential are studied. Although these papers consider operators in the continuum, it is possible to transfer their results to the discrete setting. The paper \cite{Ueki-04} considers Schr\"odinger operators with a bounded vector potential and a Gaussian random scalar potential. The covariance function is assumed to have compact support and to be sufficiently regular. The paper \cite{FischerHLM-97} provides an abstract condition on the covariance function which is sufficient to obtain a Wegner estimate. This condition is satisfied if the covariance function is non-negative. Only one example of a sign-changing covariance function satisfying this abstract condition is constructed. 
Indeed, the paper \cite{FischerHLM-97} leaves it open whether this condition is applicable to a certain class of sign-changing covariance function or not. In Section~\ref{sec:abstract} we formulate a different condition on the covariance function which implies a Wegner estimate as well. Then we show how a certain tiling theorem from \cite{LeonhardtPTV-15} can be efficiently used to show that this new condition is satisfied for all sign-changing and exponentially decaying covariance functions. The paper \cite{Veselic-11} shows that the abstract condition from \cite{FischerHLM-97} is satisfied if the covariance function has compact support and positive mean. To the knowledge of the author, this result is most general for Gaussian models with a sign-changing covariance function. 
Beside the compact support, the only situation which cannot be treated with the methods from \cite{Veselic-11} is if the covariance function has mean zero. 
That this case is particularly difficult has been observed before in terms of the alloy-type model, see, e.g., \cite{Veselic-10b}.
\par
We summarize, that our Wegner estimate is applicable in situations, whereas non of the above mentioned papers applies.
\par
Non-monotone and long-range correlations have also been modeled and studied in terms of the so-called discrete and continuous alloy-type model, see, e.g.,  \cite{Klopp-95a,HislopK-02,Veselic-02,KostrykinV-06,Veselic-10,Veselic-10b,ElgartTV-10,ElgartTV-11,Krueger-12,ElgartSS-14,LeonhardtPTV-15}. Let us stress, that the papers \cite{Veselic-02,KostrykinV-06,Veselic-10,Veselic-10b,LeonhardtPTV-15} 
use a transformation of the probability space to obtain a Wegner estimate for alloy-type models under a certain condition on the so-called single-site potential. In particular, the condition on the single-site potential in \cite{KostrykinV-06} can be seen as the analogue of our above mentioned condition on the covariance function in Section \ref{sec:abstract}.
\section{Notation and main result}
Let $d \in \NN$, $(\Omega , \mathcal{A} , \PP)$ be a probability space and $V = \{V_x : \Omega \to \RR,\ x \in \ZZ^d \}$ a stationary Gaussian process with mean zero and covariance function $\gamma : \ZZ^d \to \RR$.
This implies that any finite combination $\sum_{k=1}^n \alpha_k V_{x_k}$, $\alpha_k \in \RR$, $x_k \in \ZZ^d$, is a Gaussian random variable, the two random vectors
\[
 (V_{x_1} , V_{x_2} , \ldots , V_{x_n}) \quad \text{and} \quad (V_{x_1+y} , V_{x_2+y} , \ldots , V_{x_n+y})
\]
have for any $y \in \ZZ^d$ the same (Gaussian) probability distribution, and for all $x,y \in \ZZ^d$ we have
\[
 \EE (V_x) = 0
 \quad \text{and} \quad
 \cov \left( V_x , V_y \right) = \gamma (x-y) = \gamma (y-x) .
\]
Here, $\EE$ denotes the expectation with respect to the probability measure $\PP$, and $\cov (V_x , V_y) \allowbreak = \allowbreak \EE (V_x V_y) - \EE (V_x)\EE (V_y)$ denotes the covariance of $V_x$ and $V_y$.
We assume that $0 < \gamma (0) < \infty$ and that there are $\alpha , D > 0$ such that
\[
 \lvert \gamma (x) \rvert \leq D \euler^{-\alpha \lvert x \rvert_1} .
\]
Note that the covariance function may have unbounded support and is allowed to change its sign arbitrarily. By Cauchy Schwarz inequality we have for all $x \in \ZZ^d$
\begin{equation} \label{eq:bounded}
 \lvert \gamma (x) \rvert = \cov (V_0 , V_x) \leq \sqrt{\var (V_0)  \var (V_y)} = \gamma (0) .
\end{equation}
\par
%\subsection{Finite volume Schr\"odinger operator}
%
For $L > 0$ we introduce the notation $\Lambda_L = [-L,L]^d \cap \ZZ^d$, denote by $A_L$ an arbitrary self-adjoint operator in $\ell^2 (\Lambda_L)$, and for $\omega \in \Omega$ we denote by $V_{\omega , L}$ the multiplication operator on $\ell^2 (\Lambda_L)$ by the function $\Lambda_L \ni x \mapsto V_{\omega , L} (x) = V_x (\omega)$. For each $\omega \in \Omega$ and $\lambda > 0$ we introduce the finite volume Schr\"odinger operator
\[ 
 H_{\omega , L} = A_L + \lambda V_{\omega , L} 
\]
in $\ell^2 (\Lambda_L)$. Our main result is the following theorem.
\begin{theorem} \label{theorem:wegner}
There are constants $C_{\mathrm{W}} > 0$ and $I_0 = I_0 \in \NN_0^d$, both depending only on the covariance function $\gamma$, such that for any $L > 0$, any bounded interval $I \subset \RR$, and any $\lambda > 0$
\[
 \EE \left( \Tr \bigl(\chi_I (H_{\omega , L}) \bigr) \right) \leq
\frac{C_{\mathrm{W}}}{\lambda} \lvert I \rvert (2L+1)^{2d + \lvert I_0 \rvert_1} .
\]
\end{theorem}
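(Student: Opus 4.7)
The plan is to adapt the diagonal-sum plus spectral-averaging paradigm for Wegner estimates to the Gaussian setting via a change of Gaussian coordinates furnished by the tiling theorem of \cite{LeonhardtPTV-15}. I would start from the elementary bound
\[
\Tr\bigl(\chi_I(H_{\omega , L})\bigr) \leq \sum_{x \in \Lambda_L} \bigl\langle \delta_x, \chi_I(H_{\omega , L}) \delta_x\bigr\rangle,
\]
and replace $\chi_I$ by a smooth cutoff $\rho \geq \chi_I$ with $\int \lvert\rho'\rvert \leq 2 \lvert I \rvert$, so that differentiation along Gaussian directions is legitimate in the Combes--Hislop--Stollmann tradition.

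The obstacle to a direct spectral averaging along the single coordinate $V_x$ is that, because $\gamma$ is allowed to be sign-changing with arbitrarily small Fourier values, the conditional variance $\var(V_x \mid V_y : y \neq x)$ need not be bounded below uniformly in $L$. To circumvent this, I would apply the tiling theorem of \cite{LeonhardtPTV-15} to $\gamma$: using only $\gamma(0) > 0$ and $\lvert \gamma(x)\rvert \leq D \euler^{-\alpha \lvert x \rvert_1}$, it produces a multi-index $I_0 \in \NN_0^d$, real coefficients $(c_k)_{0 \leq k \leq I_0}$, and $\epsilon > 0$ such that the convolution $\tilde\gamma(z) := \sum_k c_k \gamma(z-k)$ is compactly supported with $\tilde\gamma(0) \geq \epsilon$. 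The auxiliary family $W_y := \sum_k c_k V_{y-k}$ is then still Gaussian, with $\cov(W_y, V_z) = \tilde\gamma(y-z)$ compactly supported in $y-z$, so that perturbing a single $V_x$ corresponds to a finite-rank, finitely-localized shift in the $W$-coordinates.

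The Wegner estimate is now obtained by spectral averaging along the $W$-directions. Because $\tilde\gamma$ has compact support, the transition matrix from $V$- to $W$-coordinates on $\Lambda_L$ (or a slight enlargement) is banded; a Cramer's-rule computation should yield a polynomial-in-$L$ bound of exponent at most $\lvert I_0 \rvert_1$ on the corresponding Jacobian and, simultaneously, a uniform lower bound on the conditional variances $\var(W_y \mid W_{y'} : y' \neq y)$. Coupling the resulting rank-finite Stollmann-type averaging to each diagonal matrix element, summing the $(2L+1)^d$ diagonal entries, and enumerating the further $(2L+1)^d$ active $W$-directions produces the exponent $2d + \lvert I_0 \rvert_1$; the factor $1/\lambda$ appears through the rescaling $V_x \mapsto \lambda V_x$ inside the Gaussian density.

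The main technical obstacle is precisely the uniform-in-$L$ control of this change of coordinates: one must simultaneously lower bound the conditional variances of the $W_y$'s and polynomially upper bound the determinant of the $V \leftrightarrow W$ transformation. Both are made possible by the compact support of $\tilde\gamma$ delivered by the tiling theorem, which is the key combinatorial input that allows the argument to cover arbitrary sign-changing, exponentially decaying $\gamma$ — the regime in which, as the introduction emphasizes, all previously available methods (conditional H\"older continuity as in \cite{DreifusK-91}, or the positive-mean compact-support assumption of \cite{Veselic-11}) break down.
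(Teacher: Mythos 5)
Your overall architecture (diagonal sum plus spectral averaging along Gaussian directions supplied by the tiling theorem of \cite{LeonhardtPTV-15}) points in the right direction, but the proposal contains two genuine gaps, and both concern the steps you flag as ``should yield''. First, you mischaracterize what the tiling theorem delivers. Proposition~4.2 of \cite{LeonhardtPTV-15} does \emph{not} produce finitely many coefficients making $\tilde\gamma(z)=\sum_k c_k\gamma(z-k)$ compactly supported with $\tilde\gamma(0)\geq\epsilon$. It produces, for each scale $L$, the polynomial coefficients $t(k)=2k^{I_0}/c$ on the enlarged box $\Lambda_{R_L}$ with $R_L\sim 2L$, and the conclusion is the pointwise lower bound $\sum_k t(k)\gamma(x-k)\geq 1$ for all $x\in\Lambda_L$ --- positivity on the finite box only, with a number of coefficients growing like $(2R_L+1)^d$. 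The paper stresses that this positivity cannot in general be upgraded to all $x\in\ZZ^d$, so no compactly supported (or globally nonnegative) transformed covariance is available; your entire change-of-coordinates scheme rests on an input that does not exist. Second, your plan requires a uniform-in-$L$ lower bound on the conditional variances $\var(W_y\mid W_{y'}:y'\neq y)$. This is precisely the kind of regularity that Section~\ref{sec:conditional} of the paper (Theorem~\ref{thm:regularity}) shows to fail for Gaussian processes: even for a compactly supported, nonnegative covariance the conditional concentration function $S(\epsilon)$ equals $1$ for every $\epsilon>0$, because the conditional variance degenerates as more sites are conditioned on. There is no reason the $W$-coordinates would behave better, so the ``Cramer's-rule computation'' you defer to is exactly the obstruction, not a technicality.

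The paper's actual route avoids both problems by averaging over a \emph{single} one-dimensional Gaussian direction rather than conditioning each coordinate on all others. For fixed $j$ one sets $\kappa=N^{-1/2}\sum_k t(k)V_k$ with $N=\sum_{k,l}t(k)t(l)\gamma(k-l)$, so that $\kappa$ is standard normal and \emph{exactly independent} of the $\sigma$-algebra generated by $\lambda V_x-\kappa W(x)$, where $W(x)=\lambda N^{-1/2}\sum_k\gamma(x-k)t(k)$. Writing $H_{\omega,L}=B+\kappa W$ with $B$ measurable with respect to that $\sigma$-algebra, the tiling theorem guarantees $W(x)\geq\lambda N^{-1/2}\delta_j(x)$ on $\Lambda_L$, and the Combes--Hislop one-parameter spectral averaging bound \eqref{eq:average_proj} applied to the $\kappa$-integral gives $\EE(\lVert\chi_I(H_{\omega,L})\delta_j\rVert^2)\leq\sqrt{N/(2\pi\lambda^2)}\,\lvert I\rvert$; no smooth cutoff, no Jacobian, and no conditional-variance bound is needed. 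The exponent $2d+\lvert I_0\rvert_1$ then comes from $\sqrt{N}\leq\sqrt{\gamma(0)}\,\lVert t_{j,L}\rVert_{\ell^1}\lesssim(2R_L+1)^dR_L^{\lvert I_0\rvert_1}$ summed over the $(2L+1)^d$ sites $j$, not from counting ``active $W$-directions''. To repair your argument you would have to replace the global change of coordinates by this single-direction decomposition, or else supply the missing global positivity and conditional nondegeneracy, which the paper shows are unavailable.
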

The proof of Theorem~\ref{theorem:wegner} is divided into three steps. First we provide an abstract Wegner estimate in Theorem~\ref{theorem:abstract_wegner}. It states that if a certain transformation of the covariance function $\gamma$ is non-negative, see Ineq.~\eqref{eq:positivity}, then a Wegner estimate follows. 
In a second step we cite a result of \cite{LeonhardtPTV-15}, which allows us to verify Ineq.~\eqref{eq:positivity} for exponentially decaying covariance functions. In a last step we combine these two results to prove the Wegner estimate stated in Theorem~\ref{theorem:wegner}.
\section{An abstract Wegner estimate} \label{sec:abstract}
The following theorem may be understood, e.g., as a discrete variant of Theorem 1 in \cite{FischerHLM-97}. However, our assumption on the covariance function is weaker than the discrete analogue of \cite{FischerHLM-97}, in the sense that Ineq.~\eqref{eq:positivity} is required for $x \in \Lambda_L$ instead of $x \in \ZZ^d$. 
This observation is essential since for the class of exponentially decaying and sign-changing covariance functions, we are able to verify Ineq.~\eqref{eq:positivity} for all $x \in \Lambda_L$, but not for all $x \in \ZZ^d$. 
\par
Similar conditions as in Ineq.~\eqref{eq:positivity} were obtained before in proofs of Wegner estimates for the alloy-type model in the discrete and continuous setting, see, e.g., \cite{KostrykinV-06,Veselic-10,Veselic-10b,LeonhardtPTV-15}.
\begin{theorem} \label{theorem:abstract_wegner}
Assume there is $L_0 > 0$ such that for arbitrary $L \geq L_0$ and every $j \in \Lambda_L$ there is a compactly supported sequence $t_{j,L} \in \ell^1 (\ZZ^d)$ such that
\begin{equation} \label{eq:positivity}
\sum_{k \in \ZZ^d} t_{j,L} (k) \gamma (x-k) \geq \delta_j (x) \quad \text{for all} \quad x \in \Lambda_L .
\end{equation}
Let further $I = [E_1,E_2]$ be an arbitrary interval. Then for any $L \geq L_0$ we have
\begin{equation*}
\EE \{ \Tr \chi_I (H_{\omega , L})\} \leq \frac{1}{\sqrt{2\pi} \lambda} \lvert I \rvert \sum_{j \in \Lambda_L} \sqrt{ \sum_{k,l \in \ZZ^d} t_{j,L} (k) t_{j,L} (l) \gamma (k-l) } .
\end{equation*}
\end{theorem}
\begin{remark}
 One might wonder whether assumption \eqref{eq:positivity} for $j = 0$ implies assumption \eqref{eq:positivity} for all $j \not = 0$ by taking suitable translates. This is not the case, since assumption \eqref{eq:positivity} is required only for $x \in \Lambda_L$ instead of $x \in \ZZ^d$.
\end{remark}

For the proof of Theorem~\ref{theorem:abstract_wegner} we will use an estimate on averages of spectral projections of certain self-adjoint operators. More precisely, on a Hilbert space $\mathcal{H}$, let $H$ be self-adjoint, $U$ symmetric and $H$-bounded, $J$ bounded and non-negative with $J^2 \leq U$, $H (\zeta) = H + \zeta U$ for $\zeta \in \RR$, and $\chi_I (H(\zeta))$ the corresponding spectral projection onto an interval $I \subset \RR$. Then, for any $g \in L^\infty (\RR) \cap L^1 (\RR)$, $\psi \in \mathcal{H}$ with $\lVert \psi \rVert = 1$ and bounded interval $I \subset \RR$, we have
\begin{equation} \label{eq:average_proj}
\int_\RR \bigl\langle \psi , J \chi_I (H(\zeta)) J \psi \bigr\rangle g(\zeta) \drm \zeta \leq \lVert g \rVert_\infty \lvert I \rvert .
\end{equation}
For a proof of Ineq.~\eqref{eq:average_proj} we refer to \cite{CombesH-94} where compactly supported $g$ is considered. The non-compactly supported case was first treated in \cite{FischerHLM-97}, see also \cite[Lemma~5.3.2]{Veselic-08} for a detailed proof.
\begin{proof}[Proof of Theorem~\ref{theorem:abstract_wegner}]
In order to estimate the expectation of the trace
\[
\Tr \chi_I (H_{\omega , L}) = \sum_{j \in \Lambda_L} \lVert \chi_I(H_{\omega , L}) \delta_j \rVert^2
\]
we fix $L \geq L_0$ and $j \in \Lambda_L$, and use the notation $t = t_{j,L}$. 
Let $W : \ZZ^d \to \RR$ and $\kappa : \Omega \to \RR$ be given by 
\[
 W (x) = \frac{\lambda}{\sqrt{N}} \sum_{k \in \ZZ^d} \gamma (x-k) t (k)
 \quad\text{and}\quad
 \kappa = N^{-1/2} \sum_{k \in \ZZ^d} V_{k}  t (k)
\]
where $N$ denotes the normalization constant $N = \sum_{k,l \in \ZZ^d} t (k) t (l) \gamma (k-l)$. Note that $N > 0$, since $N$ is the variance of the linear combination $\sum_{k \in \ZZ^d} V_k t (k)$.
We consider the decomposition
\[
 H_{\omega , L} = A_L + \lambda V_{\omega , L} = B + \kappa (\omega) W
\quad \text{where} \quad
 B = A_L  + \lambda V_{\omega , L} - \kappa (\omega) W .
\]
By construction, $\kappa$ is normally distributed with mean zero and variance one.
Moreover, $\kappa$ is independent of the $\sigma$-algebra 
$
 \mathcal{F}:= \sigma \left(\lambda V_k - \kappa W (x) \colon x \in \ZZ^d \right)
$. This follows from the fact that for all $x \in \ZZ^d$ we have
 \[
  \cov (\kappa , \lambda V_x - \kappa W (x))
  =
 \lambda \EE (\kappa V_x) - W (x) = \frac{\lambda}{N^{1/2}} \sum_{k \in \ZZ^d} \gamma (x-k) t (k) - W (x) = 0 .
  \]
Hence, we obtain 
\begin{equation*}
 \EE \left(\lVert \chi_I(H_{\omega , L}) \delta_j \rVert^2 \right) 
  = 
 \EE\left( \EE \left(\lVert \chi_I(H_{\omega , L}) \delta_j \rVert^2 \mid \mathcal{F} \right) \right)
 = 
 \EE\left( 
 \int_\RR \langle \delta_j , \chi_I (B + \kappa W) \delta_j \rangle \frac{\mathrm{e}^{-\kappa^2 / 2}}{\sqrt{2\pi}} \drm \kappa
 \right) .
\end{equation*}
%For $j \in \Lambda_L$ we denote by $P_j : \ell^2 (\Lambda_L) \to \ell^2 (\Lambda_L)$ the projection on the $j$-th coordinate.
By assumption, $W$ satisfies $W (x) \geq \lambda N^{-1/2} \delta_j (x)$ for all $x \in \Lambda_L$. Hence, we can apply Ineq.~\eqref{eq:average_proj} with $H = B$, $\zeta = \kappa$, $U = W$, $J^2 = \lambda N^{-1/2} \delta_j$ and $g$ the standard Gaussian density to obtain
\[
 \EE \left(\lVert \chi_I(H_{\omega , L}) \delta_j \rVert^2 \right) 
 = \sqrt{\frac{N}{2\pi \lambda^2}} \EE\left( 
 \int_\RR \langle \delta_j , J \chi_I (B + \kappa W) J \delta_j \rangle \mathrm{e}^{-\kappa^2 / 2} \drm \kappa
 \right) 
 \leq \sqrt{\frac{N}{2\pi \lambda^2}}  \lvert I \rvert .
\]
The result follows by summing over $j \in \Lambda_L$.
\end{proof}

\section{Linear combinations of translated covariance functions}
In this subsection we cite a result of Leonhardt, Peyerimhoff, Tautenhahn and Veseli\'c \cite{LeonhardtPTV-15}. This will allow us to ensure the positivity condition~\eqref{eq:positivity} for arbitrary sign-changing and exponentially decaying covariance functions. Recall that $\lvert \gamma (x) \rvert \leq D \exp (-\alpha \lvert x \rvert_1)$ by assumption.
\par
In order to formulate the result of \cite{LeonhardtPTV-15} we introduce some notation. For $I
=(i_1,\dots,i_d) \allowbreak \in \ZZ^d$ and $z \in \CC^d$, we define
\[
z^I = z_1^{i_1} \cdot z_2^{i_2} \cdot \ldots \cdot z_d^{i_d} .
\]
For $I \in \NN_0^d$, we use the notation
\begin{align*}
  D_z^I = \frac{\partial^{i_1}}{\partial z_1^{i_1}} \cdot \frac{\partial^{i_2}}{\partial z_2^{i_2}} \cdot \ldots \cdot \frac{\partial^{i_d}}{{\partial z_d}^{i_d}} .
  %\quad\text{and}\quad
  %I! = i_1! \cdot i_2! \cdot \ldots \cdot i_d! \,. 
\end{align*}
We also introduce comparison symbols for a multi-index: If $I, J \in \NN_0^d$, we write $J \le I$ if we have $j_r \le i_r$ for all
$r=1,2,\ldots,d$, and we write $J < I$ if $J \le I$ and $\lvert J \rvert_1 <
\lvert I \rvert_1$. For $\delta \in (0, 1-\euler^{-\alpha})$ we consider the generating function $F : D_\delta \subset \CC^d \to \CC$, 
\begin{equation*} \label{eq:Fz}
D_\delta = \{ z \in \CC^d : \lvert z_1 - 1 \rvert < \delta , \ldots , \lvert z_d - 1 \rvert < \delta \}, \quad F(z) = \sum_{k \in \ZZ^d} \gamma(-k) z^k .
\end{equation*}
The function $F$ is a holomorphic function, see \cite{LeonhardtPTV-15} for details. Since $F$ is holomorphic and not identically zero, we have $(D_z^I F) (\mathbf{1}) \not = 0$ for at least one $I \in \NN_0^d$. Here $\mathbf{1} = (1,\ldots , 1)^{\mathrm T} \in \RR^d$. Hence, there is $I_0 \in \NN_0^d$ (not necessarily unique) and $c \not = 0$, such that
\begin{equation*} \label{eq:cF}
  (D_z^I F)({\mathbf 1}) = 
		\begin{cases} 
			c \neq 0, & \text{if $I = I_0$,} \\
			0,         & \text{if $I < I_0$.} 
		\end{cases}  
\end{equation*}
\begin{proposition}[{\cite[Proposition~4.2]{LeonhardtPTV-15}}] \label{prop2}
Let $\gamma : \ZZ^d \to \RR$, $c \not = 0$ and $I_0 \in \NN_0^d$ be as above. Let further $L > 0$ and define
\begin{equation*} \label{eq:RLrel} 
  R_L = \max \left\{ 2L + \frac{2}{\alpha} \ln \frac{2\, 3^d\, D}
  {\lvert c \rvert (1-\mathrm{e}^{-\alpha/2})}, \frac{8 (d+\lvert I_0 \rvert_1)^2}{\alpha^2} \right\}. 
\end{equation*}
Then we have for all $x \in \Lambda_L$
\[ 
  \frac{2}{c} \sum_{k \in \Lambda_{R_L}}  k^{I_0}\, \gamma(x-k) \ge 1.
\] 
\end{proposition}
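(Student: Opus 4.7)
The plan is to establish the global identity
\[
\sum_{k \in \ZZ^d} k^{I_0}\gamma(x-k) = c \quad \text{for every } x \in \ZZ^d,
\]
and then to show that for $x \in \Lambda_L$ the tail of this series outside $\Lambda_{R_L}$ is at most $|c|/2$ in absolute value. Writing the truncated sum as $c$ minus the tail and dividing by $c/2$ will then deliver the claim.

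For the global identity I would exploit the Euler operator $z_j\partial_{z_j}$, which satisfies $(z_j\partial_{z_j})\,z_j^{k_j} = k_j\,z_j^{k_j}$, so that term-by-term differentiation gives
\[
\sum_{k \in \ZZ^d} k^{I_0}\gamma(-k) = \prod_{j=1}^d (z_j\partial_{z_j})^{(i_0)_j}\,F(z)\,\bigg|_{z=\mathbf{1}}.
\]
The Stirling identity $(z\partial_z)^n = \sum_{l=0}^n S(n,l)\,z^l\partial_z^l$ with $S(n,n)=1$ rewrites this as a linear combination of $(D_z^L F)(\mathbf{1})$ for multi-indices $L \leq I_0$. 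By the minimality of $I_0$ every $L < I_0$ term vanishes, and only $L = I_0$ survives with coefficient $1$ and value $c$. To extend to arbitrary $x$, I would substitute $m = k-x$ and expand $(m+x)^{I_0}$ multinomially; the summand indexed by $L < I_0$ carries the factor $\sum_m m^L\gamma(-m)$, which vanishes by the same Stirling/minimality argument applied to the lower index $L$, so only the $L = I_0$ summand survives and reproduces $c$ independently of $x$.

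For the tail bound, fix $x \in \Lambda_L$ and put $r = |x-k|_1$. The triangle inequality gives $|k|_\infty \leq L + r$, so $|k^{I_0}| \leq (L+r)^{|I_0|_1}$. Combined with $|\gamma(x-k)| \leq D\,\euler^{-\alpha r}$ and the elementary bound $|\{k : |x-k|_1 = r\}| \leq (2r+1)^d$ for the cardinality of an $\ell^1$-sphere, one obtains
\[
\sum_{k \notin \Lambda_{R_L}} |k^{I_0}\gamma(x-k)| \leq D\sum_{r > R_L - L}(2r+1)^d (L+r)^{|I_0|_1}\,\euler^{-\alpha r}.
\]
The clause $R_L \geq 8(d+|I_0|_1)^2/\alpha^2$ is calibrated so that for $r \geq R_L - L$ (which is $\geq R_L/2$) the polynomial prefactor is dominated by $\euler^{\alpha r/2}$, reducing the tail to a geometric series of total mass at most $3^d\,2^{|I_0|_1}\,\euler^{-\alpha(R_L-L)/2}/(1-\euler^{-\alpha/2})$. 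The clause $R_L \geq 2L + (2/\alpha)\ln\bigl(2\cdot 3^d D/(|c|(1-\euler^{-\alpha/2}))\bigr)$ is then precisely what is needed to push this bound below $|c|/2$.

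The main obstacle is the tail estimate: it is the two-sided balance between the polynomial growth $(2r+1)^d(L+r)^{|I_0|_1}$ and the exponential decay $\euler^{-\alpha r}$, reflected in the two separate clauses defining $R_L$, that makes the numerics close. The Euler-operator/Stirling reduction of step one is purely algebraic and essentially routine once written down. Assembling both ingredients yields $\bigl|\sum_{k \in \Lambda_{R_L}} k^{I_0}\gamma(x-k) - c\bigr| \leq |c|/2$, and hence $(2/c)\sum_{k \in \Lambda_{R_L}} k^{I_0}\gamma(x-k) \geq 1$ for every $x \in \Lambda_L$.
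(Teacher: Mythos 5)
The paper offers no proof of this proposition --- it is quoted from \cite[Proposition~4.2]{LeonhardtPTV-15} --- so your attempt has to stand on its own. Your two-step strategy (show $\sum_{k\in\ZZ^d}k^{I_0}\gamma(x-k)=c$ for every $x$, then bound the tail outside $\Lambda_{R_L}$ by $\lvert c\rvert/2$) is exactly the mechanism behind the cited result, and your first step is sound: term-by-term differentiation is legitimate because $F$ is holomorphic on $D_\delta$, the Euler-operator/Stirling reduction gives $\sum_m m^L\gamma(-m)=0$ for every $L<I_0$ and $=c$ for $L=I_0$, and the multinomial expansion of $(m+x)^{I_0}$ then yields the identity uniformly in $x$ because every lower-order coefficient vanishes.

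The tail estimate, however, does not close with the \emph{stated} $R_L$. Your bound carries the prefactor $3^d\,2^{\lvert I_0\rvert_1}$, so forcing the tail below $\lvert c\rvert/2$ needs $R_L-L\ge \frac{2}{\alpha}\ln\frac{2\cdot 3^d D}{\lvert c\rvert(1-\euler^{-\alpha/2})}+\frac{2\lvert I_0\rvert_1\ln 2}{\alpha}$, while the first clause of the definition only guarantees $R_L-L\ge L+\frac{2}{\alpha}\ln\frac{2\cdot 3^d D}{\lvert c\rvert(1-\euler^{-\alpha/2})}$; for small $L$ the surplus $L$ need not cover the extra term $\frac{2\lvert I_0\rvert_1\ln 2}{\alpha}$, and trying instead to absorb $2^{\lvert I_0\rvert_1}$ into the inequality $r^{d+\lvert I_0\rvert_1}\le\euler^{\alpha r/2}$ via the clause $R_L\ge 8(d+\lvert I_0\rvert_1)^2/\alpha^2$ also fails for some parameter choices (for instance $d=1$, $\lvert I_0\rvert_1=13$, $\alpha\approx 16$, where the first admissible radius is $r=3$ and $2^{13}\cdot 3^{14}>\euler^{3\alpha/2}$). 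So what you actually establish is the proposition with a slightly enlarged $R_L$ --- which is all that the proof of Theorem~\ref{theorem:wegner} uses, since only $R_L=2L+O(1)$ matters for Ineq.~\eqref{eq:volume} --- but not with the exact constant displayed; recovering that constant requires sharper bookkeeping, e.g.\ counting the $\ell^1$-sphere by $2d(2r+1)^{d-1}$ rather than by the full cube $(2r+1)^d$, or distributing the polynomial factors between the two clauses of $R_L$ differently.
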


\section{Proof of Theorem~\ref{theorem:wegner}}
Let $L_0 > 0$ be arbitrary. By Proposition~\ref{prop2}, Assumption~\eqref{eq:positivity} of Theorem~\ref{theorem:abstract_wegner} is satisfied with $t_{j,L} \in \ell^1 (\ZZ^d)$ given by 
\[
t_{j , L} (k) = \begin{cases}
                 2 k^{I_0} / c & \text{if $k \in \Lambda_{R_L}$}, \\
		0 & \text{else},
                \end{cases}
\]
for $L \geq L_0$ and $j \in \Lambda_{L}$. For the $\ell^1$-norm of $t_{j,L}$ we obtain
\begin{align*}
  \lVert t_{j,L} \rVert_{\ell^1 (\ZZ^d)} & \leq \frac{2}{\lvert c \rvert} \sum_{k \in \Lambda_{R_L}} \lvert k^{I_0} \rvert \leq 
\frac{2}{\lvert c \rvert}  (2R_L + 1)^d R_L^{\lvert I_0 \rvert_1} .
\end{align*}
By Proposition~\ref{prop2}, $R_L = \max \{2L + D' , D''\} < 2L+D'+D''$ with $D'$ and $D''$ depending only on the covariance function. Hence there is a constant $C_{\mathrm W}' > 0$ depending only on the covariance function such that
\begin{equation} \label{eq:volume}
 \sum_{j \in \Lambda_{L}} \lVert t_{j,L} \rVert_{\ell^1 (\ZZ^d)} \leq
C_{\mathrm W}' (2L+1)^{2d + \lvert I_0 \rvert_1}.
\end{equation}
The result now follows from Theorem~\ref{theorem:abstract_wegner}, Ineq.~\eqref{eq:bounded}, and Ineq.~\eqref{eq:volume}. \qed
\section{Regularity properties for stationary Gaussian processes}
\label{sec:conditional}
In this section we show that the abstract regularity conditions from \cite{DreifusK-91,AizenmanM-93,AizenmanG-98,Hundertmark-00,AizenmanSFH-01,Hundertmark-08} are in general not satisfied for discrete Gaussian processes. This shows that our result is not covered by the just mentioned references.
Let $Z_0^\perp = \times_{k \in \ZZ^d \setminus \{0\}} \RR$ and $\mathcal{Z}_0^\perp = \otimes_{k \in \ZZ^d \setminus \{0\}} \mathcal{B} (\RR)$. We introduce the random variable
\[
 V_0^\perp : (\Omega, \cA) \to (Z_0^\perp , \mathcal{Z}_0^\perp), \quad V_0^\perp (\omega) = (V_k (\omega))_{k \in \ZZ^d \setminus \{ 0 \}} .
\]
We denote by $\PP_{0^\perp} : Z_0^\perp \to [0,1]$ the distribution of $V_0^\perp$ with respect to $\PP$, i.e.\ $\PP_{0^\perp} (B) := \PP (\{\omega \in \Omega \colon V_0^\perp (\omega) \in B\})$.
For $a \in \RR$ and $\epsilon > 0$ we set
\[
  Y^{\epsilon,a} := \PP \bigl( V_0 \in [a,a+\epsilon] \mid V_0^\perp \bigr) :=  \EE \bigl( \Eins_{\{ V_0 \in [a,a+\epsilon] \}} \mid V_0^\perp \bigr) .
\]
For convenience, for each $a \in \RR$ and $\epsilon >0$ we fix one version $Y^{\epsilon,a}$ of the conditional expectation.
Since $Y^{\epsilon,a}$ is $\sigma (V_0^\perp)$-measurable, the factorization lemma tells us that (for each $a$ and $\epsilon$) there is a measurable function $g^{\epsilon , a} : (Z_0^\perp , \mathcal{Z}_0^\perp) \to (\RR , \cB (\RR) )$ 
such that $Y^{\epsilon,a} = g^{\epsilon , a} \circ V_0^\perp$, i.e.\ for almost all $\omega \in \Omega$ we have
\begin{equation*} \label{eq:factor}
 Y^{\epsilon , a} (\omega) = g^{\epsilon , a}(V_0^\perp (\omega)) .
\end{equation*}
For $\epsilon > 0$ we define conditional concentration function
\[
 S (\epsilon) := \sup_{a \in \RR} \esssup_{v \in Z_0^\perp}   g^{\epsilon,a}(v) .
\]
Here, the essential supremum refers to the measure $\PP_{V_0^\perp}$, that is, 
\[
 \esssup_{v \in Z_0^\perp}   g^{\epsilon,a}(v) = \inf \Bigl\{b \in \RR \colon \PP_{V_0^\perp} \bigl(\{v \in Z_0^\perp \colon g^{\epsilon , a} (v) > b\}\bigr) = 0  \Bigr\} .
\]
We formulate exemplary the regularity condition from \cite{AizenmanSFH-01}.
\begin{definition}
 The collection $V_k$, $k \in \ZZ^d$, is said to be conditional $\tau$-H\"older continuous for $\tau \in (0,1]$, if there is a constant $C$ such that for all $\epsilon > 0$
 \[
  S (\epsilon) \leq C \epsilon^\tau .
 \]
\end{definition}
The next theorem shows that the Gaussian process $V = \{V_x : \Omega \to \RR,\ x \in \ZZ^d \}$ is not $\tau$-H\"older continuous, if $\gamma (0) = 2$, $\gamma (-1) = \gamma (1) = 1$ and $\gamma (k) = 0$ for $k \in \ZZ \setminus \{-1,0,1\}$.
\begin{theorem}\label{thm:regularity} Let $d = 1$, $\gamma (0) = 2$, $\gamma (-1) = \gamma (1) = 1$ and $\gamma (k) = 0$ for $k \in \ZZ \setminus \{-1,0,1\}$. Then for any $\epsilon > 0$ we have
\[
 S (\epsilon) = 1 .
\]
\end{theorem}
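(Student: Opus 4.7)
The plan is to establish the stronger fact that $V_0$ is almost surely $\sigma(V_0^\perp)$-measurable; the claim $S(\epsilon) = 1$ then follows immediately. Indeed, if $V_0 = h(V_0^\perp)$ almost surely for some measurable $h \colon (Z_0^\perp, \mathcal{Z}_0^\perp) \to (\RR, \cB(\RR))$, then $Y^{\epsilon, a} = \Eins_{\{V_0 \in [a, a+\epsilon]\}}$ almost surely and the factorization lemma lets us take $g^{\epsilon, a}(v) = \Eins_{\{h(v) \in [a, a+\epsilon]\}}$. The set $\{v \in Z_0^\perp \colon h(v) \in [a, a+\epsilon]\}$ then has $\PP_{V_0^\perp}$-measure equal to $\PP(V_0 \in [a, a+\epsilon])$, which is strictly positive for every $a \in \RR$ and $\epsilon > 0$ since $V_0 \sim N(0, 2)$. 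Hence $\esssup_v g^{\epsilon, a}(v) = 1$ and $S(\epsilon) = 1$.

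To prove the measurability I would work directly with $V$ and its covariance. Set
\[
B_n = \sum_{k=1}^{2n} (-1)^{k+1}(V_k + V_{-k}), \qquad X_n = V_0 - B_n ,
\]
with each $B_n$ clearly $\sigma(V_0^\perp)$-measurable. A direct bilinear computation using only $\gamma(0) = 2$, $\gamma(\pm 1) = 1$ and $\gamma(k) = 0$ for $\lvert k \rvert \geq 2$ gives $\EE(X_n^2) = 2$, $\EE(V_0 X_n) = 0$, and, crucially, $\EE(X_n X_m) = 0$ for $n \neq m$: the off-diagonal contributions from $\gamma(\pm 1)$ cancel the diagonal contribution from $\gamma(0)$ exactly. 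Since the $(X_n)_{n \geq 1}$ are jointly Gaussian with vanishing pairwise covariances, they are in fact i.i.d.\ $N(0, 2)$.

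Although the individual $B_n$ therefore fail to approximate $V_0$, the Ces\`aro means $A_N = N^{-1} \sum_{n=1}^N B_n$ do: by construction,
\[
V_0 - A_N = \frac{1}{N} \sum_{n=1}^N X_n ,
\]
which tends to $0$ almost surely by the strong law of large numbers. Each $A_N$ is $\sigma(V_0^\perp)$-measurable, so $V_0 = \lim_{N \to \infty} A_N$ is $\sigma(V_0^\perp)$-measurable, completing the argument.

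The main obstacle I anticipate is exactly the choice of approximating sequence: the alternating sums $B_n$ have $L^2$-error bounded below by $\sqrt{2}$, and convergence to $V_0$ only emerges after averaging, which depends on the exact cancellations yielding $\EE(X_n X_m) = 0$. Abstractly this orthogonality reflects the classical prediction-theoretic criterion that $V_0$ lies in the $L^2$-closure of the span of $\{V_k \colon k \neq 0\}$ if and only if $1/f \notin L^1([-\pi, \pi])$, which is satisfied here because the spectral density $f(\theta) = 2 + 2 \cos \theta = \lvert 1 + \mathrm{e}^{i\theta} \rvert^2$ has a double zero at $\theta = \pi$.
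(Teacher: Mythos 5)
Your proof is correct, and it takes a genuinely different route from the paper. The paper conditions $V_0$ on the finite symmetric window $V^l=(V_{-l},\dots,V_{-1},V_1,\dots,V_l)$, computes the conditional variance $\gamma_l = 2\bigl(1-\tfrac{l}{l+1}\bigr)$ explicitly via the Gaussian conditioning formula and Cramer's rule for the tridiagonal matrix $\Gamma_l$, observes that $\gamma_l \to 0$, and then transfers the resulting concentration of the conditional law (for $a=-\epsilon/2$) to the essential supremum by a limiting argument over the events $B_{l,\delta}=\{V_k \in [-\delta,\delta],\ 0<\lvert k\rvert \le l\}$. You instead prove the stronger structural fact that $V_0$ is $\PP$-a.s.\ a measurable function of $V_0^\perp$, from which $S(\epsilon)=1$ follows for \emph{every} $a$ with $\PP(V_0\in[a,a+\epsilon])>0$, not just the centered interval. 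Your computations check out: writing $V_k = W_k + W_{k+1}$ with $W$ i.i.d.\ standard Gaussian (the MA(1) representation of this covariance), the alternating sums telescope to $B_n = W_0+W_1-W_{2n+1}-W_{-2n}$, so $X_n = V_0 - B_n = W_{2n+1}+W_{-2n}$ are indeed i.i.d.\ $N(0,2)$ and orthogonal to $V_0$, and the Ces\`aro trick plus the strong law gives $A_N \to V_0$ a.s. (One could also note that $\EE(X_N^2)=2$ for the single sequence $B_N$ already shows why averaging is unavoidable.) The two approaches buy different things: the paper's finite-window computation is quantitative and stays entirely within elementary Gaussian linear algebra, while yours isolates the conceptual reason for the failure of conditional H\"older continuity --- the spectral density $f(\theta)=2+2\cos\theta$ vanishes at $\theta=\pi$ with $1/f\notin L^1$, so $V_0$ is perfectly interpolated by its complement --- and avoids the slightly delicate $\delta\to 0$ limit relating $\PP(\cdot\mid B_{l,\delta})$ to $\PP(\cdot\mid V^l=0)$. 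The only point worth making explicit in a write-up is the standard technicality that $\lim_N A_N$ should be replaced by, say, $\limsup_N f_N(V_0^\perp)$ with $f_N$ the measurable functions realizing $A_N$, so that the limiting function $h$ is genuinely $\mathcal{Z}_0^\perp$-measurable and $V_0 = h(V_0^\perp)$ holds a.s.; you essentially do this.
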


\begin{proof}
Let $l \in \NN$, $V_+^l := (V_k)_{k=1}^l$, $V_-^l := (V_{-l+k-1})_{k=1}^l$ and $V^l = (V_l^- , V_l^+) \in \RR^{2l}$. First we note that the distribution of $V_0$ conditioned on $V^l = v \in \RR^{2l}$ is again Gaussian with variance
\[
 \gamma_l = \gamma (0) - \cov (V_0 , V^l) \cov (V^l , V^l)^{-1} \cov (V^l , V_0 ) 
\]
and mean
\[
 m_l = \cov (V_0 , V^l) \cov (V^l , V^l)^{-1} v ,
\]
see e.g.\ Proposition~3.6 in \cite{Port-94}. By assumption we have
\[
 \cov (V^l , V^l) 
 = 
 \begin{pmatrix}
  \Gamma_l & 0 \\
  0   & \Gamma_l
 \end{pmatrix} ,
\quad \text{where} \quad
 \Gamma_l
 = 
 \begin{pmatrix}
2 	& 1 	& 		& 		\\
1 	& 2 	& \ddots 	& 		\\
 		& \ddots 	& \ddots 	& 1	\\
  		& 		& 1 	& 2
\end{pmatrix} \in \RR^{l \times l}. 
\]
For its inverse we have by Cramer's rule
\begin{equation*} \label{eq:inverseelement}
 \Gamma_l^{-1} (1,1) = \Gamma_l^{-1} (l,l) = \frac{l}{l+1} .
\end{equation*}
Since $\cov (V_0 , V^l) =  (0,\ldots,0,1,1,0,\ldots,0) \in \RR^{2l}$ and $\gamma (0) = 2$ we find that
\begin{equation} \label{eq:variance}
 \gamma_l = 2\left(1 - \frac{l}{l+1}\right) .
%  \quad\text{and}\quad
%  m_l = \frac{2l}{l+1} v 
\end{equation}
Let $\epsilon > 0$, $a \in \RR$, $b := \esssup_{V_0^\perp} g^{\epsilon, a} (V_0^\perp) \in [0,1]$, and $\delta > 0$.
By definition of the conditional expectation we have for all $B \in \sigma (V_0^\perp)$ that
\begin{equation} \label{eq:definition}
 \EE \bigl( \Eins_B \Eins_{\{V_0 \in [a,a+\epsilon]\}} \bigr) = \EE \bigl( \Eins_B Y_0^{\epsilon,a} \bigr) .
\end{equation} 
We choose
\[
 B = B_{l,\delta} = \bigl\{ \omega \in \Omega \colon V_k (\omega) \in [-\delta , \delta], \ k \in \{-l , \ldots , l\} \setminus \{0\} \bigr\} .
\] 
For the right hand side of Eq.~\eqref{eq:definition} we have
\[
\EE \bigl( \Eins_{B_{l,\delta}} Y_0^{\epsilon,a} \bigr) \leq b \PP (B_{l,\delta}) . %<  \mathcal{N}_{0 ,\gamma_l} ([a,a+\epsilon]) \cdot \PP (B_{l,\delta}).
\]
The left hand side of Eq.~\eqref{eq:definition} equals
\[
 \PP ( B_{l,\delta} \cap \{V_0 \in [a,a+\epsilon]\} ) = 
  \PP ( \{V_0 \in [a,a+\epsilon]\} \mid B_{l,\delta} )\cdot \PP(B_{l,\delta}) .
\]
Hence, we have
\[
\PP ( \{V_0 \in [a,a+\epsilon]\} \mid B_{l,\delta} )
\leq 
b .
\]
By the definition of the conditional expectation we find that
\[
  \PP ( \{V_0 \in [a,a+\epsilon]\} \mid B_{l,\delta} ) \to \PP ( \{V_0 \in [a,a+\epsilon]\} \mid V^l = 0 ) = \mathcal{N}_{0 ,\gamma_l} ([a,a+\epsilon])
\]
as $\delta \to 0$. If $a = -\epsilon / 2$, then $\mathcal{N}_{0 ,\gamma_l} ([a,a+\epsilon]) \to 1$ as $l \to \infty$ by \eqref{eq:variance}. Hence we find $b = 1$.
\end{proof}

\subsection*{Acknowledgment}
The author gratefully acknowledges stimulating discussions with Ivan Veseli\'c and Christoph Schumacher.

\end{document}